\newtheorem{lem}{Lemma}
\newtheorem{theorem}{Theorem}
\newtheorem{defn}{Definition}
\newtheorem{rem}{Remark}
\newtheorem{prob}{Problem Formulation}
\def\mc{\mathcal}
\begin{document}



\title{Sensor Selection Cost Optimization for Tracking Structurally Cyclic Systems: a P-Order Solution}

\author{
\name{M. Doostmohammadian \textsuperscript{a}\textsuperscript{b}$^{\ast}$\thanks{$^\ast$Corresponding author. Email: mrdoost@gmail.com, m.doostmohammadian@ictic.sharif.edu}
, H. Zarrabi\textsuperscript{c}, and H. R. Rabiee\textsuperscript{a}}
\affil{\textsuperscript{a}ICT Innovation Center for Advanced Information and Communication Technology, School of Computer Engineering, Sharif University of Technology;
\textsuperscript{b}Mechanical Engineering Department, Semnan University;
\textsuperscript{c}Iran Telecommunication Research Center (ITRC)}
}

\maketitle

\begin{abstract}
Measurements and sensing implementations impose certain cost in sensor networks. The sensor selection cost optimization is the problem of minimizing the sensing cost of monitoring a physical (or cyber-physical) system. Consider a given set of sensors tracking states of a dynamical system for estimation purposes. For each sensor assume different costs to measure different (realizable) states. The idea is to assign sensors to measure states such that the global cost is minimized. The number and selection of sensor measurements need to ensure the observability to track the dynamic state of the system with bounded estimation error. The main question we address is how to select the state measurements to minimize the cost while satisfying the observability conditions. Relaxing the observability condition for structurally cyclic systems, the main contribution is to propose a graph theoretic approach to solve the problem in \textit{polynomial time}. Note that, polynomial time algorithms are suitable for large-scale systems as their running time is upper-bounded by a polynomial expression in the size of input for the algorithm. We frame the problem as a linear sum assignment with solution complexity of $\mathcal{O}(m^3)$.
\end{abstract}

\begin{keywords}
State-Space Models, Linear Systems, State Estimation, Observability, Convex Programming, Sensor Selection
\end{keywords}

\section{Introduction} \label{secintro}
Sensors and sensing devices are widespread in everyday use and are involved in many aspect of human life. Nowadays, sensors are advanced beyond the physical world and even are introduced in online social networks.
The emerging notion of IoT and the so-called Trillion Sensors roadmap further motivates sensor and actuator implementation in many physical systems and cyber networks 
A few examples are: in ecosystems and environmental monitoring \citep{may1972ecology}, security and vulnerability of social networks \citep{pequito_gsip,jstsp14}, eHealth and epidemic monitoring \citep{nowzari2016epidemic}, Dynamic Line Rating (DLR) in smart power grids \citep{usman_smc:08,kar2015consensus+grid}, etc.
In these large-scale applications the cost of sensing is a challenge. The cost may represent energy consumption, the economic cost of sensors, and even the additive disturbance due to, for example, long distance communication in wireless sensor networks. The rapidly growing size of IoT and sensor networks motivates minimal cost sensor-placement solution for practical applications.

There exist different approaches toward sensor selection optimization. In \citep{boyd2009sensor}, authors study sensor selection for noise reduction. This work introduces combinatorial problem of selecting $k$ out of $m$ sensors to optimize the volume of probabilistic confidence ellipsoid containing measurement error by adopting  a convex relaxation.
Authors in \citep{pequito_gsip}, consider minimum sensor coverage for dynamic social inference. Their idea is to find minimum sensor collection to ensure generic social observability. The authors show the relaxation lies in \textit{set covering} category and is generally NP-hard\footnote{Note that, the NP-hard problems are believed to have no solution in time complexity upper-bounded by a polynomial function of the input parameters.} to solve.  In \citep{sinopoli2013network_obsrv} the source localization problem under observability constraints is addressed. The authors aim to find the minimal possible observers to exactly locate the source of infection/diffusion in a network. They state that this problem is NP-hard and propose approximations to solve the problem.
In another line of research, distributed optimization is discussed in \citep{wang2010control}, where dynamic feedback algorithms robust to disturbance is proposed to minimize certain cost function over a sensor network. Optimal sensor coverage with application to facility allocation is studied in \citep{MiadCons}. The authors propose a distributed deployment protocol as a local optimal solution in order to assign resources to group of mobile autonomous sensors under certain duty to capability constraints.
Minimal actuator placement ensuring controllability is discussed in \citep{jad2015minimal}. Authors provide P-order approximations to a generally NP-hard problem by considering control energy constraints. In  \citep{jstsp14}, the minimal \textit{number} of observers for distributed inference in social networks is discussed. Similarly, minimizing the number of actuators for structural controllability following specific rank constraints is addressed in \citep{commault2015single}. 

\begin{figure}[!t]
	\centering
	\includegraphics[width=2.4in]{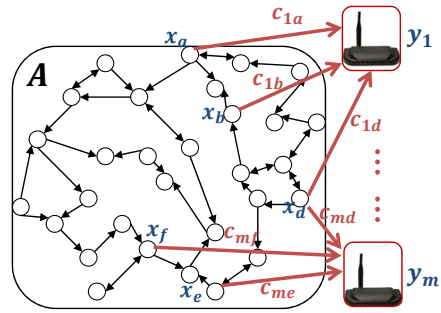}
	\caption{This figure shows a group of sensors monitoring a dynamical system,~$A$. Each sensor can measure states which are accessible/realizable. Assigning a sensor $y_i$ to measure a state $x_j$ has a cost $c_{ij}$. The sensor placement cost optimization problem finds the optimal sensor-state assignment such that the cost is minimum and the system is globally observable for the group of sensors.}
	\label{figsensor}
\end{figure}
This paper studies minimum sensor placement cost for tracking structurally cyclic dynamical systems (see Fig.\ref{figsensor}). In general, state measurements are costly and these costs may change for different sensor selection. This is due to various factors, e.g. sensor range and calibration, measurement accuracy, embedding/installation cost, and even environmental conditions. Therefore, any collection of state-sensor pairs may impose specific sensing costs. The main constraint, however, is that not all collection of sensor measurements provide an \textit{observable} estimation. Observability determines if the system outputs convey sufficient information over time to infer the internal states of the dynamic system. With no observability, no stable estimation can be achieved and the tracking error covariance grows unbounded.  In conventional sense, observability requires algebraic tests, e.g. the observability Gramian formulation \citep{bay} or the Popov-Belevitch-Hautus test \citep{hautus}. These methods are computationally inefficient especially in large-scale systems. In contrast, this paper adopts a structural approach towards observability. The methodology is irrespective of numerical values of system parameters, while the structure is fixed but the system parameters vary in time \citep{woude:03}. Indeed, this approach makes our solution practically feasible for Linear Structure-Invariant (LSI) systems. This may arise in linearization of nonlinear dynamics, where the linearized Jacobian is LSI while the values depend on the linearization point. As it is known the observability and controllability of Jacobian linearization is sufficient for observability and controllability of the original nonlinear dynamics \citep{Liu-nature,nonlin}.

Another structural property of the system is the system rank. In particular, for full rank systems the adjacency graph of the system (system digraph) is structurally cyclic, i.e. there exist a disjoint union of cycles covering all state nodes in system digraph. An example is dynamic system digraphs including self loops, which implies that for every system state the $1$st-derivative is a function of the same state (referred to as self-damped systems in \citep{acc13_mesbahi}). An example arises in systems representing ecological interactions \citep{may1972ecology,may2001book} among species, where intrinsic self-dampening ensures the eco-stability around equilibrium state.  In \citep{slotine2015intrinsic}, authors extended the case to network of coupled $d$th-order intrinsic dynamics. In such case, the self-dynamic impose a cyclic subgraph in the large-scale structure of the system digraph. In addition, in distributed estimation and sensor network literature typically it is assumed that the system matrix is invertible \citep{battistelli_cdc} and therefore system is full-rank. In \citep{liu-pnas,arcak2006diagonal} authors study the stability and observability of cyclic interconnected systems resulted from biochemical reactions, while in \citep{nowzari2016epidemic} authors analyze the self-damped epidemic equations integrated in social and human networks.
These examples motivates the study of structurally cyclic systems in this paper.

\textit{Contributions:}
Towards sensor selection cost optimization, this paper, first, considers LSI dynamic systems. In such systems the parameters vary in time while the system structure is unchanged as in Jacobian linearization of nonlinear dynamic systems. This is also the case, for example, in social systems with invariant social interactions, and in power systems with fixed system structure but time-varying parameters due to dynamic loading. Second, the observability constraint for cost optimization is framed as a selection problem from a necessary set of states. We relax the observability constraint by defining the \textit{equivalent} set of states necessary for observability. This is a novel approach towards cost optimization for estimation purposes. Third, the optimization is characterized as a Linear Sum Assignment Problem (LSAP), where the solution is of polynomial complexity. In this direction, the NP-hard observability optimization problem, as reviewed in the beginning of this section, is relaxed to a P-order problem for the case of \textit{structurally cyclic} systems. Note that, for general systems, this problem is NP-hard to solve (see \citep{pequito_gsip} for example). The relaxation in this paper is by introducing the concept of cost for states measured by sensors and the concept of structural observability. Further, the definition of the cost is introduced as a general mathematical concept with possible interpretation for variety of applications. Particularly, note that this P-order formulation is not ideal but practical, and this work finds application in monitoring large-scale systems such as social systems \citep{pequito_gsip,jstsp14}, eco-systems \citep{may1972ecology}, and even epidemic monitoring \citep{nowzari2016epidemic}. To the best of our knowledge, no general P-order solution is proposed in the literature for this problem.

\textit{Assumptions:}
The following assumptions hold in this paper:

(i) The system is globally observable to the group of sensors.

(ii) Number of sensors is at least equal to the number of crucial states necessary for observability, and at least one state is accessible/measurable by each sensor.

Without the first assumption no estimation scheme works, and there is no solution for optimal observability problem. In the second assumption, any sensor with no access to a necessary state observation is not a player in the optimization game. Other assumptions are discussed in the body of the paper.

The outline of the paper is as follows. In Section~\ref{secstruc}, relevant structural system properties and algorithms are reviewed. Section~\ref{secgraph} states the graph theoretic approach towards observability.  Section~\ref{seccost} provides the novel formulation for the cost optimization problem. Section~\ref{secLSAP} reviews the so-called assignment problem as the solution. Section~\ref{secrem} states some remarks on the results, motivation, and application of this optimal sensor selection scheme. Section~\ref{secexamp} illustrates the results by  two academic examples.
Finally, Section~\ref{secconc} concludes the paper.

\textit{Notation:} We provided a table of notations in Table~\ref{tab_realnet} to explain the terminologies and symbols in the paper.
\begin{table}[hbpt!]
	\centering
	\caption{ Table of Notation.}
	\begin{tabular}{|c|c|}
		\hline
		 $A$ &~system matrix\\
		\hline
		 $\mc{A}$ &~ structured system matrix \\
		\hline
		$H$ &~measurement matrix \\
		\hline
		$\mc{H}$ &~structured measurement matrix	\\
		\hline
		$\nu$ &~system noise\\
		\hline
		$\eta$ &~ measurement noise\\		
		\hline
		$k$ &~ discrete time index \\
		\hline
		$x$ &~ system state \\
		\hline
		$y$ &~ measurement	\\	
		\hline
		$c$ &~ state-sensor cost matrix \\
		\hline
		$\mc{C}$ &~ SCC-sensor cost matrix \\
		\hline
		$\mc{Z}$ &~ assignment matrix \\
		\hline
		$\tilde{c}$ &~ pseudo-cost	\\	
		\hline
		$n$ &~ number of states \\
		\hline
		$m$ &~ number of sensors/measurements	\\
		\hline
		$\mc{J}$ &~ Jacobian matrix \\
		\hline
		$\mc{X}$ &~ set of state nodes \\
		\hline
		$\mc{Y}$ &~ set of sensor nodes \\		
    	\hline
		\hline	
	\end{tabular}
	\label{tab_realnet}
\end{table}

\section{Structured System Theory} \label{secstruc}
Consider the state of linear system, $\underline{x}$, evolving as\footnote{The underline notation represents a \textit{vector} variable.}:
\begin{eqnarray} \label{sysc}
\dot{\underline{x}} = A\underline{x} + \underline{\nu}
\end{eqnarray}
and in discrete time as:
\begin{eqnarray} \label{sysd}
\underline{x}(k+1) = A\underline{x}(k) + \underline{\nu}(k)
\end{eqnarray}
where $\underline{x} \in \mathbb{R}^n $ is the vector of system states, and $\underline{\nu} \sim \mathcal{N}(0,V)$ is independent identically distributed (iid) system noise.
Consider a group of sensors, indexed by $y_i,~ i=1,\hdots,m$ each taking a noise-corrupted state measurement as:
\begin{eqnarray} \label{sensc}
y_i = H_i\underline{x} + \eta_i
\end{eqnarray}
or in discrete time as:
\begin{eqnarray} \label{sensd}
y_i(k) = H_i\underline{x}(k) + \eta_i(k)
\end{eqnarray}
where $H_i$ is a row vector, $y_i \in \mathbb{R}$ is the sensor measurement, and $\eta_i \sim \mathcal{N}(0,Q_i) $ is the zero-mean measurement noise at sensor $i$.

Let $\mathcal{A} \sim \{0,1\}^{n \times n}$ represent the structured matrix, i.e. the zero-nonzero pattern of the  matrix $A$. A nonzero element implies a system parameter that may change by time, and the zeros are the fixed zeros of the system. Similarly, $\mathcal{H} \sim \{0,1\}^{m \times n}$ represents the structure of measurement matrix $H$. A nonzero entry in each row of $\mathcal{H}$ represents the index of the measured state by the corresponding sensor. This zero-nonzero structure can be represented as a directed graph  $\mathcal{G}_{sys} \sim (\mathcal{X} \cup \mathcal{Y},\mathcal{E})$ (known as system digraph). Here, $\mathcal{X}$ is the set of state nodes $\{{x}_1,\hdots {x}_n\}$ each representing a state, and $\mathcal{Y}$ is the output set $\{{y}_1, \hdots {y}_m\}$ representing the set of sensor measurements. The nonzero entry $\mathcal{A}_{ij}$  is modeled by an edge ${x}_j \rightarrow {x}_i $. The set $\mathcal{E} = (\mathcal{X} \times \mathcal{X}) \cup  (\mathcal{X} \times  \mathcal{Y})$ is the edge set.  Edges in $\mathcal{E}_{xx} = \mathcal{X} \times \mathcal{X}$ represent the dynamic interactions of states in $\mathcal{G}_{A} \sim (\mathcal{X},\mathcal{E}_{xx})$, and edges $\mathcal{E}_{xy} = (\mathcal{X} \times  \mathcal{Y})$ in $\mathcal{G}_{xy} = (\mathcal{X} \cup \mathcal{Y},\mathcal{E}_{xy})$ represent the flow of state measurement information into sensors. It is clear that $\mathcal{G}_{sys} = \mathcal{G}_{A} \cup \mathcal{G}_{xy}$. Define a path as a chain of non-repeated edge-connected nodes and denote $\xrightarrow{path} \mathcal{Y}$ as a path ending in a sensor node in $\mathcal{Y}$. Define a cycle as a path starting and ending at the same node.

Similar to the structured matrices and the associated digraphs for linear systems, one can define a digraph for the Jacobian linearization of the nonlinear systems, also referred to as inference diagrams \citep{liu-pnas}. In the nonlinear case the structure of the system digraph is related to the the zero-nonzero structure of the Jacobian matrix $\mathcal{J}$. For system of equations $\dot{\underline{x}} = f(\underline{x}, \underline{\nu})$ if $\mathcal{J}_{ij} = \frac{\partial f_i}{\partial x_j}$ is not a fixed zero, draw a link ${x}_j \rightarrow {x}_i $ in the system digraph. This implies that ${x}_i$ is a function of ${x}_j$ and state ${x}_j$ can be inferred by measuring ${x}_i$ over time. Following this scenario for all pairs of states and connecting the inference links the system digraph $\mathcal{G}_A$ is constructed. It should be mentioned that we assume the nonlinear function $f$ is globally \textit{Lipschitz}, and therefore the system of equation has a unique solution and the Jacobian matrix is defined at all operating points.

The properties of the system digraph and its zero-nonzero structure are closely tied with the generic system properties. Such properties are almost independent of values of the physical system parameters.  It is known that if these specific properties of the system hold for a choice of numerical values of free parameters, they hold for almost all choices of system parameters, where these system parameters are enclosed in nonzero entries of the system matrix. Therefore, the zero-nonzero structure of the system and the associated system digraph ensures sufficient information on such generic properties. In general, efficient structural algorithms are known to check these properties while the numerical approach might be NP-hard to solve \citep{woude:03}. An example of generic properties are structural rank ($\mathcal{S}$-rank) and structural observability and controllability \citep{woude:03,jstsp14}.

\subsection{Structurally Cyclic Systems}
The following definition defines structurally cyclic systems:

\begin{defn}\label{defrank}
	A system is structurally cyclic if and only if its associated system digraph includes disjoint family of cycles spanning all nodes \citep{woude-rank}.
\end{defn}

There exist many real-world systems which are structurally cyclic. As mentioned in Section \ref{secintro}, any complex network/system governed by coupled $d$th-order differential equations and randomly weighted system parameters is structurally cyclic (see \citep{slotine2015intrinsic} for more information). Such structures may arise in biochemical reaction networks \citep{arcak2006diagonal,liu-pnas}, epidemic spread in networks \citep{nowzari2016epidemic}, ecosystems \citep{may1972ecology,may2001book} and even in social networks where each agent has intrinsic self-dampening dynamics represented as self-loop in social digraph (see the example in \citep{pequito_gsip}).

There are efficient methods to check if the graph is cyclic and includes a disjoint cycle family, namely matching algorithms. A matching of size $m$, denoted by $\mathcal{M}_m$, is a subset of nonadjacent edges in $\mathcal{E}_{xx}$ spanning $m$ nodes in $\mathcal{X}$. Define nonadjacent directed edges as two edges not sharing an end node. Define a maximum matching as the matching of maximum size in $\mathcal{G}_A$, where the size of the matching, $m$, is defined by the number of nodes covered. A perfect matching is a matching covering all nodes in the graph, i.e. $\mathcal{M}_n$ where $n = |\mathcal{X}|$.
\begin{lem}
	A system of $n$ state nodes is structurally cyclic if and only if its digraph includes a perfect matching $\mathcal{M}_n$. 
\end{lem}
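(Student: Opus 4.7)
The plan is to prove both directions by exploiting the well-known correspondence between perfect matchings in a directed graph (in the bipartite sense used throughout structural systems theory) and spanning disjoint cycle covers. First I would unpack the matching definition carefully: a matching is a set of directed edges such that no two share a tail and no two share a head, and a perfect matching $\mathcal{M}_n$ therefore consists of exactly $n$ edges in which every node of $\mathcal{X}$ appears as a tail of precisely one edge and as a head of precisely one edge. This reformulation will be the hinge of both implications.

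For the direction ``structurally cyclic $\Rightarrow \mathcal{M}_n$ exists'', I would start from a disjoint cycle family $\{C_1,\dots,C_k\}$ covering all state nodes and simply take $\mathcal{M} := E(C_1)\cup\dots\cup E(C_k)\subseteq \mathcal{E}_{xx}$. Because the cycles are node-disjoint and each cycle contributes exactly one outgoing and one incoming edge at every one of its nodes, $\mathcal{M}$ has distinct tails and distinct heads and touches every node, hence is a perfect matching of size $n$.

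For the converse ``$\mathcal{M}_n$ exists $\Rightarrow$ structurally cyclic'', I would view the subgraph $(\mathcal{X},\mathcal{M}_n)$ as a functional digraph: the matching assigns to each node a unique successor (via its out-edge in $\mathcal{M}_n$) and a unique predecessor (via its in-edge). Following successors from any node must, by finiteness, revisit some node, and because predecessors are also unique the first revisited node is the starting one, producing a cycle. Peeling off this cycle leaves a smaller subgraph with the same in-degree/out-degree-one property, and induction yields a partition of $\mathcal{X}$ into disjoint cycles, i.e.\ structural cyclicity in the sense of Definition~\ref{defrank}.

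The main obstacle is essentially notational rather than mathematical: the paper's ``nonadjacent edges'' must be read in the bipartite sense (no shared tails, no shared heads) rather than the undirected sense, since otherwise a cycle of length $\ell$ would only contribute $\lfloor\ell/2\rfloor$ matching edges and the lemma would fail. Once this convention is made explicit, both directions reduce to the elementary observation that a digraph in which every vertex has in- and out-degree exactly one decomposes uniquely into a disjoint union of cycles, and no nontrivial calculation is required.
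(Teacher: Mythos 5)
Your argument is correct, but note that the paper does not actually prove this lemma at all: it simply defers to Murota's book. Your proposal therefore supplies the missing elementary argument, and it is the standard one. The forward direction (take the union of the edge sets of the disjoint spanning cycles) and the converse (a spanning subgraph in which every node has in-degree and out-degree exactly one decomposes uniquely into node-disjoint cycles, by following successors until the first repeat, which must be the start since predecessors are unique, then peeling and inducting) are both sound. Your most valuable observation is the one you flag as ``notational'': the paper's literal definition of nonadjacency (``two edges not sharing an end node'') would make the lemma false --- a single directed $3$-cycle is structurally cyclic yet admits no two pairwise node-disjoint edges --- so the matching must be taken in the bipartite sense (distinct tails and distinct heads), which is exactly the convention of Murota and of the structural-controllability literature the paper builds on. Making that convention explicit is not pedantry; it is the hinge on which the equivalence with spanning cycle covers (equivalently, with a nonzero term in the determinant expansion, hence full structural rank) rests. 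What the paper's citation buys is brevity and a pointer to the general matroid-theoretic framework; what your proof buys is a self-contained, two-line-per-direction argument that also exposes and repairs the imprecision in the paper's own definition of a matching.
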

\begin{proof}
	The detailed proof is given in \citep{murota}.
\end{proof}

The size of the maximum matching is known to be related to the structural rank of the system matrix defined as follows:
\begin{defn}
	For a structured matrix $\mathcal{A}$, define its structural rank ($\mathcal{S}$-rank) as the maximum rank of the matrix $A$ for all values of non-zero parameters. 
\end{defn}
Note that, the $\mathcal{S}$-rank of $\mathcal{A}$ equals to the maximum size of disjoint cycle family, where the size represents the number of nodes covered by the cycle family \citep{harary}. For structurally cyclic systems the maximum size of the cycle family equals to $n$, number of system states. This implies that for structurally cyclic system $\mathcal{S}\mbox{-rank}(\mathcal{A}) = n$. This result can be extended to nonlinear systems as $\mathcal{S}\mbox{-rank}(\mathcal{J}) = n$ at almost all system operating points. 

As an example, consider a graph having a random-weighted self-cycle at every state node. In this example, every self-cycle is a matching edge, graph contains a perfect matching and therefore is cyclic. On the other hand, these self-loops imply that every diagonal element in the associated structured matrix, $\mathcal{A}$, is nonzero. Having random values at diagonal entries and other nonzero parameters the determinant is (almost) always nonzero and system is structurally full rank \footnote{This can be checked simply by MATLAB, considering random entries as nonzero parameters of the matrix. The probability of having zero determinant is zero.}. This is generally true for network of intrinsic $d$th-order self dynamics (instead of $1$st order self-loops) as addressed in \citep{slotine2015intrinsic}. In the coupled dynamic equations a $d$th-order dynamic represents a cyclic component in system digraph. The intra-connection of these individual dynamics construct the digraph of large-scale physical system. Assuming time-varying system parameters the system remains structurally full rank.

\section{Graph Theoretic Observability} \label{secgraph}
Observability plays a key role in estimation and filtering. Given a set of system measurements, observability quantifies the information inferred from these measurements to estimate the global state of the system. This is irrespective of the type of filtering and holds for any estimation process by a group of sensors/estimators. Despite the algebraic nature of this concept, this paper adopts a graph theoretic approach towards observability. This approach is referred to as \textit{structural observability} and deals with system digraphs rather than the algebraic Gramian-based method. The main theorem on structural observability is recalled here.
\begin{theorem} \label{thmstruc}
	A system digraph is structurally observable if the following two conditions are satisfied: 
	
	\begin{itemize}
		
		\item Every state, is connected to a sensor via a directed path of states, i.e. $x_i \xrightarrow{path} \mathcal{Y}$, $i \in \{1,\hdots,n\}$.
		
		\item There is a sub-graph of disjoint cycles and output-connected paths that spans all state nodes.
		
	\end{itemize}
\end{theorem}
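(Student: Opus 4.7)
The plan is to prove this via duality with Lin's classical theorem on structural controllability. First I would recall the standard algebraic fact that a linear system $(A, H)$ is observable if and only if the dual system $(A^T, H^T)$ is controllable; correspondingly, at the structural level, the digraph of the dual system is obtained from $\mathcal{G}_{sys}$ by reversing every edge and reinterpreting each sensor node $y_i$ as an input driver.

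Next I would translate the two hypotheses under this edge reversal. Condition one, which requires every state node to reach a sensor along a directed path in $\mathcal{G}_{sys}$, becomes exactly the statement that in the reversed digraph every state is reachable from an input node; this is the input-accessibility condition in Lin's theorem. Condition two, which requires a spanning subgraph of disjoint cycles together with output-connected paths, becomes after reversal a spanning union of disjoint cycles (buds) and input-rooted stems, i.e.\ a spanning cactus, which is the standard combinatorial formulation of the no-dilation condition for structural controllability. With these translations in place, Lin's theorem guarantees that the dual system is structurally controllable, and dualizing back yields structural observability of the original system.

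I expect the main obstacle to be carrying out the cycle-and-path translation cleanly under edge reversal: one has to verify that an output-connected path $x_i \to \cdots \to y_j$ in $\mathcal{G}_{sys}$ corresponds to an input-rooted stem in the dual digraph, and that the disjoint cycle family in the original remains a disjoint cycle family (buds) in the dual, so that the stems and buds together form a spanning cactus in the sense of Lin. A more self-contained alternative route would avoid invoking Lin altogether: under the two hypotheses one can directly construct a parameter assignment for which the observability matrix (equivalently the Popov--Belevitch--Hautus matrix) attains full rank, using the supplied cycle cover to certify nonsingularity of the state-transition block and the output-connected paths to propagate this rank to the sensor rows. Genericity of the nonzero structural parameters then upgrades this single witness to almost every realization, which is by definition what structural observability requires.
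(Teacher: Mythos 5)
Your proposal is correct and follows essentially the same route as the paper, which simply cites Lin's structural controllability theorem and its dual (via Reinschke and Liu et al.) rather than spelling out the edge-reversal translation you sketch. Your duality argument is the intended proof, just written out in more detail than the paper provides.
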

\begin{proof}
	The original proof of the theorem is available in the work by Lin \citep{lin} for the dual problem of structural controllability, and more detailed proof is available in \citep{rein_book}. The proof for structural observability is given in \citep{liu-pnas}.
\end{proof}
The conditions in Theorem~\ref{thmstruc} are closely related to certain properties in digraphs. The second condition holds for structurally cyclic systems, since all states are included in a disjoint family of cycles. The first condition can be checked by finding Strongly Connected Components (SCCs) in the system digraph \citep{asilomar11}. Recall that a SCC includes all states mutually reachable via a directed path. Therefore, the output-connectivity of any state in SCC implies the output-connectivity of all states in that SCC, and consequently, this satisfies the first condition in Theorem~\ref{thmstruc}. By measuring one state in every SCC, all states in that SCC are reachable (and observable), i.e. $x_i \in SCC_l$ and $x_i\xrightarrow{path} \mathcal{Y}$ implies $SCC_l \xrightarrow{path} \mathcal{Y}$. This further inspires the concept of \textit{equivalent} measurement sets for observability stated in the following lemma.
\begin{lem}
	States sharing an SCC are equivalent in terms of observability.
\end{lem}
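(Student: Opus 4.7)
The plan is to reduce the claim to the two conditions of Theorem~\ref{thmstruc} and exploit the defining property of an SCC, namely that every pair of states in the component are mutually reachable by directed paths of states. Since the second (cyclic-spanning) condition of Theorem~\ref{thmstruc} is a property of $\mathcal{G}_A$ alone and does not depend on which particular state is measured, the only condition that can distinguish two choices of measured state is the first one, output-connectivity. So I would frame the argument around showing that all states in an SCC are simultaneously output-connected or simultaneously not, regardless of which single state in the SCC is picked as the measured one.

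First, I would fix an SCC, call it $SCC_l$, and two candidate measured states $x_i, x_j \in SCC_l$. Assume $x_i$ is chosen, i.e.\ there is an edge $x_i \rightarrow y$ for some sensor $y \in \mathcal{Y}$. For any other state $x_k \in SCC_l$, strong connectivity inside the SCC guarantees a directed path $x_k \xrightarrow{path} x_i$ of state nodes. Concatenating this path with the edge $x_i \to y$ yields a directed path $x_k \xrightarrow{path} \mathcal{Y}$, so the first condition of Theorem~\ref{thmstruc} is met for every $x_k \in SCC_l$. Swapping the roles of $x_i$ and $x_j$, the same argument shows that choosing $x_j$ as the measured state would also output-connect every element of $SCC_l$. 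Hence the structural observability contribution of the SCC is insensitive to the particular state chosen within it.

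Finally, I would extend the reasoning to states \emph{outside} the SCC to confirm the equivalence in the global sense. If some state $x_r \notin SCC_l$ reaches $\mathcal{Y}$ through a path that passes through $SCC_l$, the path can always be rerouted inside the SCC (again using mutual reachability) to exit via whichever representative state is measured. Therefore replacing the measured state inside $SCC_l$ by any other member of the same SCC preserves output-connectivity everywhere in the digraph, establishing the claimed equivalence.

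The only mildly delicate point I anticipate is the implicit definitional one: making explicit that ``equivalence in terms of observability'' here means interchangeability for the purpose of verifying the conditions of Theorem~\ref{thmstruc}, not algebraic equality of observability Gramians. Once that interpretation is fixed, the proof is a direct consequence of the SCC definition and Theorem~\ref{thmstruc}, so I do not expect any real technical obstacle.
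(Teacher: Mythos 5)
Your proposal is correct and follows essentially the same route as the paper's own proof: both arguments rest on the mutual reachability of states within an SCC, so that a path $x_j \xrightarrow{path} x_i$ concatenated with $x_i \xrightarrow{path} \mathcal{Y}$ gives $x_j \xrightarrow{path} \mathcal{Y}$, reducing the equivalence to the first condition of Theorem~\ref{thmstruc}. Your additional remarks (that the cycle-spanning condition is unaffected by the choice of measured state, and that external paths through the SCC can be rerouted) are sound elaborations of the same idea rather than a different argument.
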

\begin{proof}
	This is directly follows from the definition. Since for every two states $x_i$ and $x_j$ we have $x_j \xrightarrow{path} x_i$, then following Theorem~\ref{thmstruc} having $x_i \xrightarrow{path} \mathcal{Y}$ implies that $x_j \xrightarrow{path} \mathcal{Y}$. See more details in the previous work by the first author \citep{asilomar11}.
\end{proof}
Observationally equivalent states provide a set of options for monitoring and estimation. This is of significant importance in reliability analysis of sensor networks. These equivalent options are practical in recovering the loss of observability in case of sensor/observer failure \citep{asilomar14}.
In order to explore states necessary for observability, we partition all SCCs in terms of their reachability by states in other SCCs.
\begin{defn}
	~
	\begin{itemize}
		
		\item Parent SCC: is a SCC with no outgoing edge to states in other SCCs, i.e. for all $x_i \in SCC_l$ there is no $x_j \notin SCC_l$ such that $x_i \rightarrow x_j$.
		
		\item Child SCC: is a non-parent SCC (a SCC having outgoing edges to other SCCs), i.e. there exist $x_i \in SCC_l$ and $x_j \notin SCC_l$ such that $x_i \rightarrow x_j$.
		
	\end{itemize}
\end{defn}

\begin{lem} \label{lemsep}
	Parent SCCs do not share any state node.
\end{lem}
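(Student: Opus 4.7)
The plan is to prove this by contradiction, leveraging the classical fact that strong connectivity is an equivalence relation and hence SCCs partition the vertex set $\mathcal{X}$. In particular, the qualifier ``parent'' plays no real role: any two distinct SCCs, parent or child, are automatically disjoint, so parent SCCs inherit this property.

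Concretely, I would proceed as follows. First, recall that two state nodes $x_i$ and $x_j$ lie in the same SCC exactly when there exist directed paths $x_i \xrightarrow{path} x_j$ and $x_j \xrightarrow{path} x_i$. Mutual reachability is reflexive, symmetric, and transitive (by concatenating paths), so it is an equivalence relation on $\mathcal{X}$. Therefore its equivalence classes, which are precisely the SCCs, partition $\mathcal{X}$.

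Next, suppose for contradiction that two distinct parent SCCs, $SCC_l$ and $SCC_{l'}$, share some state node $x$. Then for every $x_i \in SCC_l$ we have mutual paths $x_i \leftrightarrow x$, and for every $x_j \in SCC_{l'}$ we have $x \leftrightarrow x_j$. Concatenating these paths yields $x_i \leftrightarrow x_j$, so $SCC_l \cup SCC_{l'}$ is itself strongly connected. By maximality of an SCC, this forces $SCC_l = SCC_{l'}$, contradicting the assumption that they are distinct. Hence parent SCCs cannot share any state node.

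Since the argument rests only on the equivalence-class structure of SCCs, there is no real obstacle; the main thing to be careful about is simply observing that the definitions of parent and child SCCs in the preceding paragraph do not alter the fundamental partition property, so the lemma reduces to a standard structural fact about directed graphs.
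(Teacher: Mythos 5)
Your argument is correct and is essentially the paper's own proof: the paper likewise notes the lemma holds for all SCCs (parent or not) because two components sharing a node would merge into a single larger strongly connected component. You simply spell out the underlying equivalence-relation/partition structure in more detail, which is a welcome elaboration but not a different route.
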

\begin{proof}
	The above lemma is generally true for all SCCs. The proof is clear; if two components share a state node they in fact make a larger component. 
\end{proof}
Following the first condition in Theorem~\ref{thmstruc}, the given definitions inspire the notion of necessary set of equivalent states for observability.
\begin{lem}  \label{lemSCCp}
	At least one measurement/sensing from every parent SCC is necessary for observability.
\end{lem}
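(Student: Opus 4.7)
The plan is to argue by contrapositive using the first condition of Theorem~\ref{thmstruc}: every state must have a directed path to some sensor node in $\mc{Y}$. I would suppose, for contradiction, that some parent SCC, call it $SCC_l$, contains no measured state, i.e.\ no sensor in $\mc{Y}$ has an incoming edge from any $x_i \in SCC_l$, and then show that no state of $SCC_l$ can be output-connected.

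The key step is to trace where a directed path leaving a state $x_i \in SCC_l$ can go. Any such path is a sequence of edges in $\mc{E}_{xx} \cup \mc{E}_{xy}$. An edge in $\mc{E}_{xy}$ would require some $x_j$ on the path to be directly measured by a sensor; but by assumption $x_j \in SCC_l$ is impossible (no measurement in $SCC_l$), and to reach any $x_j \notin SCC_l$ one would first need to leave $SCC_l$ via an edge in $\mc{E}_{xx}$ from some $x_k \in SCC_l$ to some $x_k' \notin SCC_l$. By the definition of a parent SCC, no such edge exists. Hence every directed walk starting in $SCC_l$ stays entirely inside $SCC_l$ and never touches a sensor, so $x_i \not\xrightarrow{path} \mc{Y}$ for all $x_i \in SCC_l$.

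This violates the first bullet of Theorem~\ref{thmstruc}, so the system is not structurally observable. Contrapositively, structural observability forces at least one state of each parent SCC to be measured, which is the claim. Lemma~\ref{lemsep} is not even needed for this particular argument, but it guarantees that the ``at least one measurement per parent SCC'' requirements can be counted without double counting, which will matter later when the cost optimization is formulated over parent SCCs rather than over individual states.

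The only subtle point, and the one I would be most careful about, is the ``paths only'' definition of reachability: one might worry about a path that revisits $SCC_l$ after leaving it, but this cannot happen for a parent SCC because the only way back in would require an in-edge from an outside state, and any such incoming edge together with the assumed out-edge would create a larger strongly connected set, contradicting the maximality of $SCC_l$. Once this is spelled out, the argument is a one-line consequence of the parent-SCC definition and Theorem~\ref{thmstruc}.
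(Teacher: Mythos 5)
Your argument is correct, and it is in fact the argument that matches the \emph{necessity} claim in the lemma statement: a parent SCC has no outgoing edges in $\mathcal{E}_{xx}$ to states outside itself, so if none of its states is directly measured, no directed path from any of its states can ever reach $\mathcal{Y}$, and the output-connectivity condition of Theorem~\ref{thmstruc} fails. Interestingly, the paper's own written proof takes the complementary route: it argues that child SCCs are path-connected to parent SCCs (more precisely, every child SCC has a path into some parent SCC), so that $\mathcal{Y}$-connectivity of all parent SCCs already yields $\mathcal{Y}$-connectivity of all states. That is really the \emph{sufficiency} half of the surrounding claim that the number of necessary sensors equals the number of parent SCCs; the necessity half, which is what the lemma literally asserts, is exactly your contrapositive and is deferred by the paper to the cited earlier work. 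So your proof buys a self-contained justification of the stated lemma, while the paper's text buys the companion fact needed later (that parent SCCs are the \emph{only} components requiring dedicated sensors). You are also right that Lemma~\ref{lemsep} is not needed here and that the worry about re-entering the SCC is vacuous, since the path cannot leave $SCC_l$ in the first place. One small point to make explicit: Theorem~\ref{thmstruc} as quoted gives only the sufficiency direction (``observable \emph{if}''), whereas your contrapositive uses the necessity of output-connectivity. That direction is standard (a state with no path to any sensor cannot influence any measurement and hence cannot be reconstructed) and is part of Lin's full characterization cited by the paper, but it deserves a sentence rather than being read off the theorem as stated.
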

\begin{proof}  
	This is because the child SCCs are connected to parent SCCs via a direct edge or a directed path. Therefore, $\mathcal{Y}$-connectivity of parent SCCs implies $\mathcal{Y}$-connectivity of child SCCs. In other words, $ x_i \in SCC_l$, $x_i \rightarrow x_j$ and $x_j \in SCC_k \xrightarrow{path} \mathcal{Y}$ implies $SCC_l \xrightarrow{path} \mathcal{Y}$. See detailed proof in the previous work by the first author \citep{asilomar11}.
\end{proof}
This further implies that \textit{ the number of necessary sensors for observability equals to the number of parent SCCs in structurally cyclic systems}. In such scenario, it is required to assign a sensor for each parent SCC in order to satisfy the observability condition. For more details on SCC classification and equivalent set for observability refer to \citep{asilomar11,jstsp14}.

\section{Cost Optimization Formulation} \label{seccost}
In sensor-based applications every state measurement imposes certain cost. The cost may be due to, for example, maintenance and embedding expenses for sensor placement, energy consumption by sensors, sensor range and calibration, and even environmental condition such as humidity and temperature. In this section, we provide a novel formulation of the \textit{minimal cost} sensor selection problem accounting for different sensing costs to measure different states. Contrary to \citep{pequito_gsip}, the final formulation in this section has a polynomial order solution as it is discussed in Section~\ref{secLSAP}.

\begin{prob}
	Assume a group of sensors and a cost $c_{ij}$ for every sensor $y_i , i \in \{1, \hdots, m\}$ measuring state $x_j , j \in \{1, \hdots, n\}$. Given the cost matrix $c$, the sensor selection cost optimization problem is to minimize sensing cost for tracking the global state of the dynamical system \eqref{sysc} (or discrete time system \eqref{sysd}). Monitoring the global state requires observability conditions, leading to the following formulation:
	\begin{equation}
	\begin{aligned}
	\displaystyle\min \limits_{\mathcal{H}} ~~ & \sum_{i=1}^{m} \sum_{j=1}^{n} (c_{ij}\mathcal{H}_{ij}) \\
	\text{s.t.} ~~ & (A,H)-observability,\\
	~~ &  \mathcal{H}_{ij} \in \{0,1\}\\
	\end{aligned}
	\label{minfirst}
	\end{equation}
	where $A$ and $H$ are, respectively, system and measurement matrix, and $\mathcal{H}$ represents the $0-1$ structure of $H$.
\end{prob}
In this problem formulation, $ \mathcal{H}$ is the $0-1$ pattern of $H$, i.e. a nonzero element $\mathcal{H}_{ij}$ represents the measurement of state $x_j$ by sensor $y_i$. 
First, following the discussions in Section~\ref{secstruc}, the observability condition is relaxed to structural observability. 
\begin{equation}
\begin{aligned}
\displaystyle\min \limits_{\mathcal{H}} ~~&  \sum_{i=1}^{m} \sum_{j=1}^{n} (c_{ij}\mathcal{H}_{ij}) \\
\text{s.t.} ~~ & (\mathcal{A},\mathcal{H})-observability,\\
~~ &  \mathcal{H}_{ij} \in \{0,1\}\\
\end{aligned}
\label{min1}
\end{equation}
Notice that in this formulation $(\mathcal{A},\mathcal{H})$-observability implies the \textit{structural} observability of the pair $(A,H)$. Primarily assume that the number of sensors equals to the number of necessary measurements for structural observability. In control and estimation literature \citep{jstsp14, commault2015single}, this is addressed to find the minimal number of sensors/actuators .
This consideration is in order to minimize the cost. Notice that extra sensors impose extra sensing cost, or they take no measurements and play no role in estimation. Therefore, following the assumptions in Section~\ref{secintro}, the number of sensors, at first, is considered to be equal to the number of necessary measurements for observability (i.e. number of parent SCCs). This gives the following reformulation of the original problem.
\begin{prob}
	Considering minimum number of sensors for observability, the sensor selection cost optimization problem is in the following form:
	\begin{equation}
	\begin{aligned}
	\displaystyle\min \limits_{\mathcal{H}} ~~  & \sum_{i=1}^{m} \sum_{j=1}^{n} (c_{ij}\mathcal{H}_{ij}) \\
	\text{s.t.} ~~ &(\mathcal{A},\mathcal{H})-observability,\\
	~~ &  \sum_{i=1}^{m} \mathcal{H}_{ij} \leq 1\\
	~~&  \sum_{j=1}^{n} \mathcal{H}_{ij} = 1\\
	~~ &  \mathcal{H}_{ij} \in \{0,1\}\\
	\end{aligned}
	\label{min2}
	\end{equation}
\end{prob}
The added conditions do not change the problem. The constraint $\sum_{i=1}^{m} \mathcal{H}_{ij} \leq 1$ implies that all states are measured by at most one sensor, and $\sum_{j=1}^{n} \mathcal{H}_{ij} = 1$ implies that all sensors are responsible to take a state measurement. Notice that, in case of having, say $N$, sensors more than the  $m$ necessary sensors for observability, this condition changes to $ \sum_{j=1}^{N} \mathcal{H}_{ij} \leq 1 $ to consider the fact that some sensors are not assigned, i.e. they take no (necessary) measurement.

Next, we relax the observability condition following the results of Section~\ref{secgraph} for structurally cyclic systems. Revisiting the fact that parent SCCs are separate components from Lemma~\ref{lemsep}, the problem can be stated as assigning a group of sensors to a group of parent SCCs.
For this formulation, a new cost matrix $\mathcal{C}_{m \times m}$ is developed. Denote by $\mathcal{C}_{ij}$, the cost of assigning a parent set, $SCC_j$, to sensor $y_i$. Define this cost as the \textit{minimum} sensing cost of states in parent $SCC_j$:
\begin{equation} \label{eqSCCcost}
\mathcal{C}_{ij}= \min \{c_{il}\},~ x_l \in  SCC_j,~ i,j \in \{1, \hdots, m\}
\end{equation}
This formulation transforms matrix $c_{m \times n}$ to matrix $\mathcal{C}_{m \times m}$. This transfers the sensor-state cost matrix to a lower dimension cost matrix of sensors and  parent SCCs. Further, introduce new variable $\mathcal{Z}\sim \{0,1\}^{m \times m}$ as a structured matrix capturing the assignment of sensors to parent SCCs. Entry $\mathcal{Z}_{ij}$ implies sensor indexed $i$ having a state measurement of SCC indexed $j$, and consequently $ SCC_j \xrightarrow{path} y_i $. Recalling that sensing all parent SCCs guarantee observability (see Lemma~\ref{lemSCCp}), the problem formulation can be modified accordingly in a new setup as follows.
\begin{prob} \label{prob_final}
	For structurally cyclic systems, having a set of $m$ sensors to be assigned to $m$ parent SCCs, the sensor selection cost optimization is given by:
	\begin{equation}
	\begin{aligned}
	\displaystyle\min \limits_{\mathcal{Z}} ~~  & \sum_{i=1}^{m} \sum_{j=1}^{m} (\mathcal{C}_{ij}\mathcal{Z}_{ij}) \\
	\text{s.t.} ~~ &  \sum_{j=1}^{m} \mathcal{Z}_{ij} = 1 \\
	~~&  \sum_{i=1}^{m} \mathcal{Z}_{ij} = 1 \\
	~~ &  \mathcal{Z}_{ij} \in \{0,1\} \\
	\end{aligned}
	\label{minlsap}
	\end{equation}
\end{prob}
In this formulation, the new constraint $ \sum_{j=1}^{m} \mathcal{Z}_{ij} = 1$ is set to satisfy sensing of all parent SCCs as necessary condition for observability. The formulation in \eqref{minlsap} is well-known in combinatorial programming and optimization. It is referred to as Linear Sum Assignment Problem (LSAP) \citep{assignmentSurvey}. It is noteworthy that the three statements in this section represent the same problem and the differences stem from mathematical relaxations and observability consideration. 

The above formulation is one-to-one assignment of sensors and parent SCCs. By changing the first constraint to $\sum_{j=1}^{m} \mathcal{Z}_{ij} \geq 1$ we allow more than one parent SCC to be assigned to each sensor. This is the generalization to the primary assumption of assigning only one state to each sensor. For the second constraint in \eqref{minlsap}, considering $\sum_{i=1}^{m} \mathcal{Z}_{ij} \geq 1$ implies that more than one sensor may be assigned to a parent SCC. This adds redundancy in sensor selection and consequently increases the cost, and thus should be avoided. On the other hand, $\sum_{i=1}^{m} \mathcal{Z}_{ij} \leq 1$ violates the necessary condition for observability as some of the SCCs may not be assigned and tracked by sensors. 

\section{Linear Sum Assignment Problem (LSAP)} \label{secLSAP}
The novel formulation of sensor selection problem proposed in the Problem Formulation~\ref{prob_final} is known to be a classical optimization problem referred to as the \textit{Assignment problem}.
Assignment problem is widely studied as many problems, e.g. in network flow theory literature, are reduced to it. The problem deals with matching two sets of elements in order to optimize an objective function. Linear Sum Assignment Problem (LSAP) is the classical problem of assigning $m$ tasks to $m$ agents (or matching $m$ grooms with $m$ brides, $m$ machines/companies to $m$ jobs, etc.) such that the matching cost is optimized \citep{assignmentSurvey}. The LSAP is mathematically similar to the weighted matching problem in bipartite graphs. This problem is also called one-to-one assignment as compared to one-to-many assignment problem in which one agent is potentially assigned to more than one task. There have been many solutions to this problem. From the original non-polynomial solution to later polynomial-time primal-dual solutions including the well-known Hungarian method. The Hungarian Algorithm, proposed by Kuhn \citep{kuhnHungarian} and later improved by Munkres, is of complexity order of $\mathcal{O}(m^4)$ with $m$ as the number of tasks/agents. The algorithm was later improved by \citep{edmondsHungarian} to the complexity order of $\mathcal{O}(m^3)$. The algorithm is given in Algorithm~\ref{alg_hung}.

\begin{algorithm}[!t] \label{alg_hung} 
	\textbf{Given:} Cost matrix $\mathcal{C}=[\mathcal{C}_{ij}]$ \;
	\For{$i=1,\hdots,n$}{
		$u_i =$ smallest integer in row $i$ of $\mathcal{C}$\;   
		\For{$j=1,\hdots,n$}{$\hat{\mathcal{C}}_{ij}=\mathcal{C}_{ij}-u_i$\	
		}
	}
	\For{$j=1,\hdots,n$}{
		$v_j =$ smallest integer in column $j$ of $\hat{\mathcal{C}}$\;   
		\For{$i=1,\hdots,n$}{$\hat{\mathcal{C}}_{ij}=\hat{\mathcal{C}}_{ij}-v_j$\	
		}
	}
	$S =$ an independent set of zeros of max size in  $\hat{C}$\; 	 
	$q = |S|$ \;
	\While{$q<n$}{
		Cover $\hat{\mathcal{C}}$\;
		$k = $ smallest entry in $\hat{\mathcal{C}}$ not covered by a line\;
		\For{$i=1,\hdots,n$}{
			\For{$j=1,\hdots,n$}{\If{$\hat{\mathcal{C}}_{ij}$ is not covered}{$\hat{\mathcal{C}}_{ij} = \hat{\mathcal{C}}_{ij} - k$ }
				\If{$\hat{\mathcal{C}}_{ij}$ is covered twice}{$\hat{\mathcal{C}}_{ij} = \hat{\mathcal{C}}_{ij} + k$ }	
			}
		}	
		$S =$ an independent set of zeros of max size in  $\hat{\mathcal{C}}$\;
		$q = |S|$ \;}
	\For{$i=1,\hdots,n$}{
		\For{$j=1,\hdots,n$}{
			\If{$\hat{\mathcal{C}}_{ij} \in S$ }{$\mathcal{Z}_{ij} = 1$}
			\Else{$\mathcal{Z}_{ij} = 0$}
		}
	}
	
	\textbf{Return} $\mathcal{Z}=[\mathcal{Z}_{ij}]$\;\
	
	\caption{Hungarian Algorithm}
\end{algorithm}

Other than these original solutions, recently new linear programming methods to solve the classical one-to-one LSAP  and variations of this original setting are discussed. To name a few, \textit{distributed} assignment problem based on a game-theoretic approach is proposed in \citep{zavlanos2008distributed}. Sensors/Agents are assigned to tasks relying only on \textit{local} information of the cost matrix. The complexity of the algorithm is $\mathcal{O}(m^3)$ in the worst case scenario. In \citep{bertsekas1981assign} a new algorithm is proposed whose average complexity matches Edmonds Hungarian method in large-scale. 
All these solutions can be applied to solve the Problem Formulation~\ref{prob_final} and its variant, for example even when the sensing costs are changing. However, in terms of performance the Edmonds’ Hungarian algorithm \citep{edmondsHungarian} is more practical and used in programming softwares like MATLAB. The algorithm by \citep{zavlanos2008distributed} is practical in distributed setting while the algorithm by \citep{bertsekas1981assign} is as practical as \citep{edmondsHungarian} only in large-scale applications. Note that the focus of this paper is on the polynomial complexity of such algorithms to be practical in large-scale application. Therefore, although other non-polynomial solutions to LSAP may exist, they are not of interest  in large-scale sensor selection optimization.

Note that, in the LSAP the cost matrix has to be a complete $m$ by $m$ matrix. However, in practical application some states may not be measured by some sensors (not \textit{realizable} by some sensors). This may, for example, caused by mismatch in range/calibration of the sensor and what is required for the state measurement. In the sensor-state cost matrix, $c$, this simply implies that some entries are not defined. For this unmeasurable states, the cost is infinite, in application a large enough cost (pseudo-cost $\tilde{c}_{ij}$) can be given. By introducing $\tilde{c}_{ij}$ and having a complete cost matrix, the LSAP problem can be solved using anyone of the polynomial methods mentioned in this section.  Notice that if the optimal cost from LSAP in \eqref{minlsap} is greater than the pseudo cost $\tilde{c}_{ij}$, the sensor selection has no feasible solution. A possible explanation is that at least one parent SCC is not realizable by any sensor, implying the assignment of a pseudo cost by LSAP. In case the feasible solution exists, no non-realizable state is assigned and the LSAP gives the optimal feasible solution in polynomial time.

\section{Remarks} \label{secrem}
This section provides some remarks to further illustrate the results, motivation, and application of the polynomial order sensor selection solution proposed in this work.
\begin{rem}
The main motivation on this paper is to find a polynomial order solution to optimize sensor selection problem for cyclic systems. Notice that in general, as mentioned in the introduction and literature review, the problem is NP-hard to solve, see for example \citep{pequito_gsip} and references therein. However, we showed that if system is cyclic there exist a polynomial order solution for sensor selection optimization. Note that this is significant in large-scale system monitoring as polynomial order algorithms are practical in large-scale applications because their running time is upper-bounded by a polynomial expression in the size of input for the algorithm. Examples of such large-scale cyclic systems are given in the introduction. 
\end{rem}
\begin{rem} \label{remLSI}
It should be mentioned that LSI dynamics are practically used in state estimation and complex network literature, see \citep{Liu-nature,liu-pnas,jstsp14,woude:03,davison1973LSI,davison1974LSI,egerstedt2007LSI}  and references therein. The motivation behind structured system theory is that this approach holds for systems with time-varying parameters while the system structure is fixed. This is significant in system theory as in many applications the system non-zero parameters change in time while the zero-nonzero pattern of the system matrix is time-invariant. Indeed, this structural analysis deals with system properties (including observability and system rank) that do not depend on the numerical values of the parameters but only on the underlying structure (zeros and non-zeros) of the system \citep{woude:03,davison1973LSI,davison1974LSI}. It is known that  if a structural property holds for one admissible choice of non-zero elements/parameters it is true for almost all choices of non-zero elements/parameters and, therefore, is called generic property  \citep{woude:03}. Another motivation is in linearization of nonlinear systems where the nonlinear model is linearized over a continuum of operating points, see \citep{jstsp14,Liu-nature}. In this case the structure of the Jacobian matrix is fixed while the matrix elements change based on the linearization point, therefore implying the LSI system model. In this direction, the observability/controllability of LSI model implies the observability/controllability of the nonlinear model \citep{jstsp14,Liu-nature} and therefore the results of the LSI approach leads to conclusions on the nonlinear model.  
\end{rem} 
\begin{rem} Based on the mentioned features of LSI model in Remark~\ref{remLSI}, 
the structural observability almost always implies algebraic observability, therefore LSI relaxation in Problem Formulation 2 almost always holds. Further, for structurally cyclic systems the problem can be exactly framed as a LSAP, and therefore the relaxation in Problem Formulation 3 and polynomial order solution is exact for cyclic systems.
\end{rem}
\begin{rem} Note that the SCC decomposition is unique \citep{algorithm} and therefore the cost matrix $\mc{C}$ and the formulation in \eqref{minlsap} are uniquely defined.
\end{rem}
\begin{rem}
	While this work focuses on sensor selection and observability, the results can be easily extended to the dual problem of controllability and particularly input/actuator selection. In this case the problem is to choose among the possible inputs to direct/control the dynamical system to reach the desired state with optimal cost. Note that the only mathematical difference is that the constraint in Problem Formulation 1 and 2 changes to $(\mc{A},\mc{H})$-controllability and the same graph theoretic relaxation holds. Because of duality the problem changes to assigning sensors optimally to Child SCCs resulting the same formulation as in Problem Formulation 3 where the solution is known via Hungarian algorithm.    
	In fact, in the context of control of networked systems, this problem is also known as  the so-called \textit{leader selection}. In this problem using LSI model the idea is to determine the control leaders in structured multi-agent system. In this case the cost, for example,  may represent energy consumption by agents. For more information on this subject we refer interested readers to \citep{fitch2013leader,lin2011leader,lin2014leader,clark2014leader}.
\end{rem}

\section{Illustrative Examples} \label{secexamp}
This section provides academic examples to illustrate the results of the previous sections.

\textit{Example 1:}  Consider a dynamical system with the associated digraph given in Fig.\ref{figgraph}. Every node represents a state of the system and every edge represent the dynamic interaction of two states. For example an edge from $x_3$ to $x_1$ and the self-loop on $x_1$ implies $\dot{ x_1} = a_{13}x_3 + a_{11}x_1 $. Assuming a nonlinear dynamic system, the same link represents a possible nonlinear interaction function $\dot{ x_1} = f_1(x_3,x_1)$ where the Jacobian linearization is in the form $\dot{ x_1} = \frac{\partial f_1}{\partial x_3}x_3 + \frac{\partial f_1}{\partial x_1}x_1$.\footnote{Notice that having self-cycle at every node implies that the diagonal entries of the Jacobian matrix, $\frac{\partial f_i}{\partial x_i}$, are non-zero and the Jacobian is structurally full rank.} Such terminology holds for all state nodes and edges in the system digraph and relates the system digraph to the differential equation governing the dynamic phenomena.
\begin{figure}[!t]
	\centering
	\includegraphics[width=2.8in]{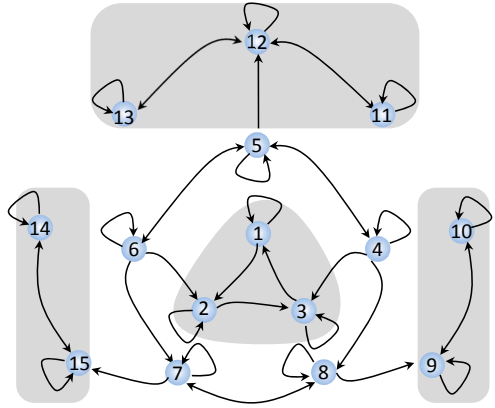}
	\caption{This figure shows a cyclic system digraph. Among the components in this graph, the highlighted ones represent parent SCCs. Every parent SCC must be tracked by (at least) one sensor.}
	\label{figgraph}
\end{figure}
\begin{figure}[!t]
	\centering
	\includegraphics[width=2.35in]{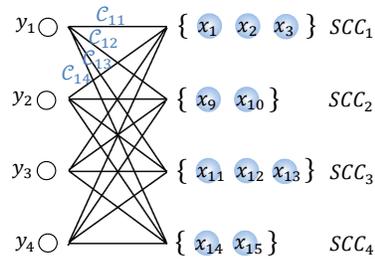}
	\caption{This figure shows the possible assignment of (states in) parent SCCs to sensors in Fig.\ref{figgraph}. The assignment cost for parent SCC-sensor pairs is defined as the minimum cost of states in each parent SCC according to \eqref{eqSCCcost}. Therefore the problem is framed as a linear sum assignment of SCCs and sensors as proposed in \eqref{minlsap}}.
	\label{figassign}
\end{figure}
A group of sensors are needed to track the $15$ states of this dynamic system. If a state is measurable by a sensor, the measurement is associated with a cost $c_{ij}$.  However, not every state is measurable by every sensor.  In this example, we assume the pseudo-cost as $\tilde{c}_{ij} = max(c_{ij})*m*n $. This prevents the assignment algorithm to assign these non-realizable sensor-state pairs. Among the realizable states, some states are not necessary to be measured. This is defined based on the SCC classification discussed in Section~\ref{secgraph}.
In the particular example of Fig.~\ref{figgraph}, the inner component and the outer components have no outgoing edges to other SCCs; therefore, $\{x_1,x_2,x_3\}$, $\{x_9,x_{10}\}$, $\{x_{11},x_{12},x_{13}\}$, and $\{x_{14},x_{15}\}$ are parent SCCs. The other components, $\{x_4,x_5,x_6\}$ and $\{x_7,x_8\}$, are child SCCs. For observability, each parent SCC is needed to be tracked by at least one sensor. The selection of which state to be measured in each parent SCC is cost-based; in every parent SCC, each sensor measures the state with minimum cost
. This takes the problem in the form given in Fig.\ref{figassign} and in the form of \eqref{minlsap}. Then, Hungarian method in Section~\ref{secLSAP} is applied to solve this LSAP.

For numerical simulation, in this system graph example we consider uniformly random costs $c_{ij}$ in the range $(0,10)$. Number of sensors equals to $m=4$ that is the number of parent SCCs. The non-realizable states are defined randomly with probability $50\%$, i.e. almost half of the states are not measurable by sensors \footnote{We should mention that this is only for the sake of simulation to check the algorithm. In real applications if the measurable states are not observable no sensor selection optimization algorithm can provide an observable estimation of the system, therefore in real applications it is usual to assume that at least one observable solution exist for the problem otherwise no sensor selection and estimation scheme works.}. In the assignment algorithm, the pseudo-cost of a non-realizable state is considered $\tilde{c}_{ij}=max\{c_{ij}\}*m*n$, which is certainly more than $\sum_{i=1}^{m} \sum_{j=1}^{n} c_{ij}$. Fig.~\ref{figsimulation} shows the cost of all realizable observable and non-observable assignments. In this figure, for the sake of clarification the indexes are sorted in ascending and descending cost order respectively for observable and non-observable assignments. Among the realizable state-sensor pairs, if the selected sensors do not measure one state in each parent SCC, this sensor selection is not observable (violating the Assumption (i) in  Section~\ref{secintro}). Among the observable selections, the optimal sensor selection has the minimum cost of $10.88$, which matches the output of the LSAP using the Hungarian method. Note that, the naive solution in Fig.~\ref{figsimulation} has complexity $\mc{O}(m!)$, and is only provided for clarification and checking the results of the LSAP solution.
\begin{figure}[!t]
	\centering
	\includegraphics[width=2.8in]{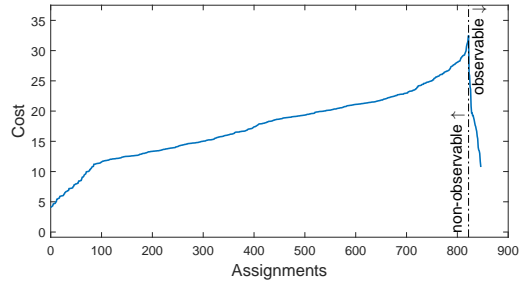}
	\caption{In order to check the results, this figure shows the sensing cost for all observable and non-observable sensor selections. The cost of optimal sensor assignment from the Hungarian method matches the minimum cost for observable selections (in the right-hand side of the graph). }
	\label{figsimulation}
\end{figure}

\textit{Example 2:}
In order to show the advantage of using the proposed approach  over the existing methods we provided another example. This example, shown in Fig.\ref{figgraph2}, is similar to the example given in \citep{pequito_gsip}, where the graph represents a social dynamic system. In such social digraphs each node represents an individual and the links represent social interaction and opinion dynamics among the individuals \citep{jstsp14,FriedkinSocial}. According to \citep{jstsp14,FriedkinSocial}, the social system is generally modeled as LSI system where the social interactions (as the structure) are fixed while the social influence of individuals on each-other change in time. This is a good example stating the motivation behind considering LSI model in this work. We intentionally presented this example to compare our results with \citep{pequito_gsip}. As \citep{pequito_gsip} claims for such example there is no polynomial order solution to solve the problem while here we present a sensor selection algorithm with polynomial complexity of $\mc{O}(m^3)$. Similar to the Example 1, a group of sensors (referred to as information gatherers in social system \citep{pequito_gsip}) are required to monitor $20$  states of the social system. Notice that having a cycle family covering all states the system is Structurally full-rank. The measurement of each state by each sensor is associated with a cost $c_{ij}$ and if not measurable the cost is assigned with $c'_{ij}=max\{c_{ij}\}*m*n$. Applying the DFS algorithm one can find the SCCs and Parent/Child classification in $\mc{O}(m^2)$ as shown in Fig.\ref{figgraph2}. Then, the assignment problem in Fig.\ref{figassign2} and in the form of \eqref{minlsap} can be solved using the Hungarian method in $\mc{O}(m^3)$.
\begin{figure}[!t]
	\centering
	\includegraphics[width=2.7in]{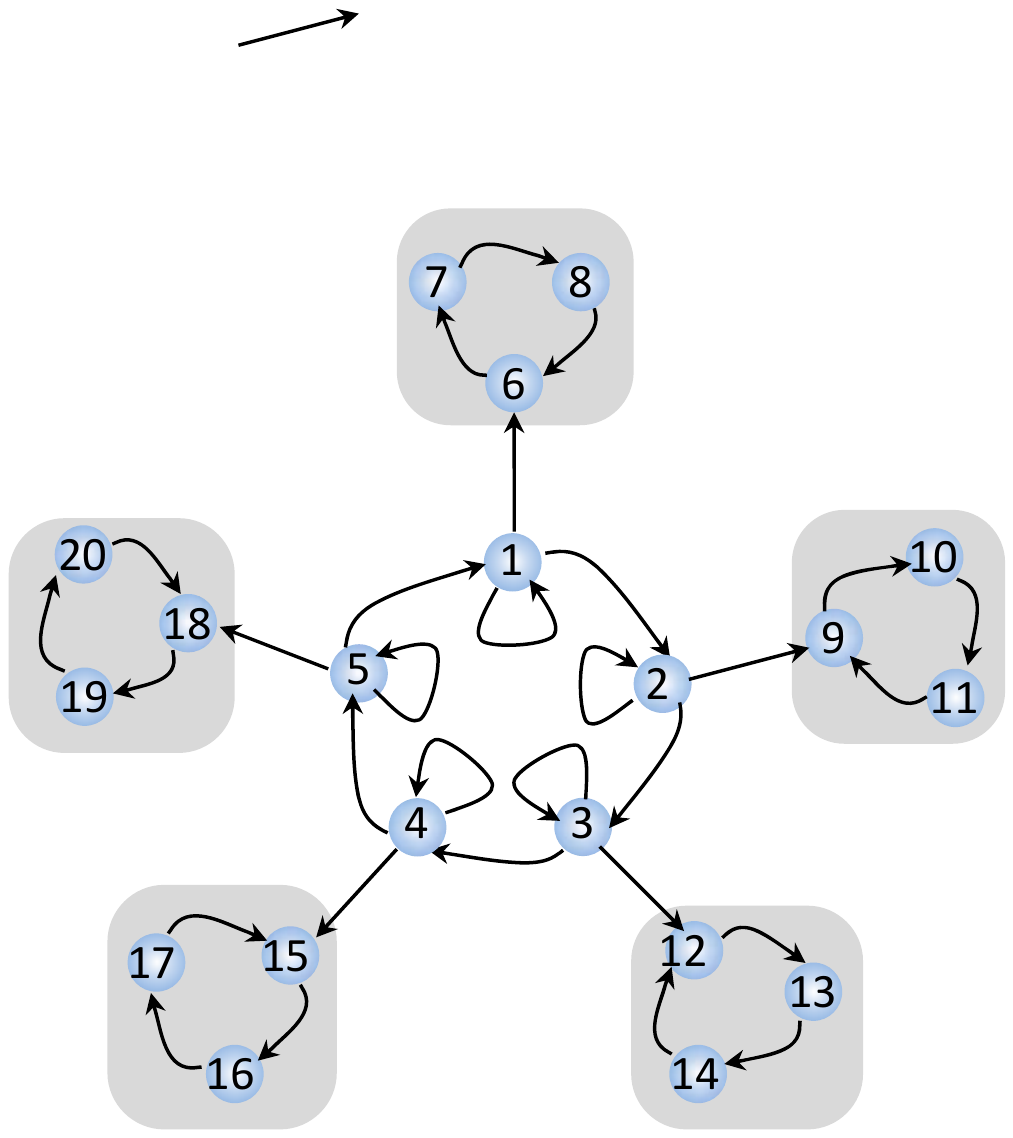}
	\caption{This figure shows a cyclic social system. The highlighted components represent parent SCCs, each tracked by a social sensor.}
	\label{figgraph2}
\end{figure}
\begin{figure}[!t]
	\centering
	\includegraphics[width=2.25in]{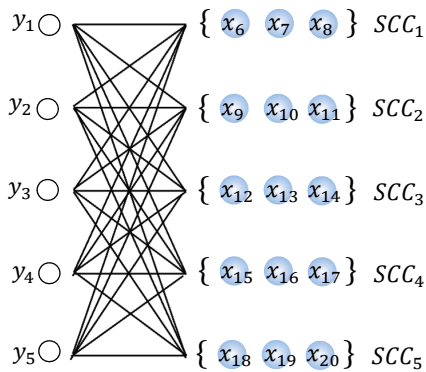}
	\caption{This figure shows the assignment of states in parent SCCs to social sensors for social system in Fig.\ref{figgraph2}, where the assignment cost is defined as the minimum cost of states in each parent SCC according to \eqref{eqSCCcost}. Therefore the problem is framed as a LSAP (see \eqref{minlsap}) and solvable by Hungarian method.}
	\label{figassign2}
\end{figure}
Again we consider uniformly random costs $c_{ij}$ in the range $(0,10)$ for numerical simulation. Since there are $5$ parent SCCs in this social digraph we need $m=5$ social sensor. Among the sensor-state pairs, the non-realizable states are defined randomly with probability $30\%$ with pseudo-cost $\tilde{c}_{ij}=max\{c_{ij}\}*m*n$. In Fig.~\ref{figsimulation2} the costs of all realizable observable and non-observable assignments are shown, where among the observable cases the optimal sensor selection has the minimum cost of $7.048$. As expected, this value matches the output of the proposed solution, i.e. the Hungarian algorithm for the LSAP  in equation \eqref{minlsap}. 
\begin{figure}[!t]
	\centering
	\includegraphics[width=2.8in]{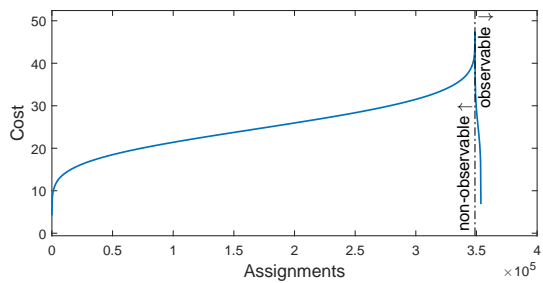}
	\caption{This figure shows the sensing cost for all observable and non-observable sensor selections for the system example given in Fig.\ref{figgraph2}. The optimal sensor selection cost of the Hungarian method solving the assignment in Fig.\ref{figassign2} matches the minimum cost for observable assignments. }
	\label{figsimulation2}
\end{figure}
 
\section{Conclusion} \label{secconc}
There exists many efficient algorithms to check the matching properties and structural rank of the system digraph, namely \textit{Hopcraft-Karp algorithm} \citep{hopcraft} or the \textit{Dulmage-Mendelsohn decomposition} \citep{dulmage58} of $\mc{O}(n^{2.5})$ complexity. Moreover, the SCC decomposition and the partial order of SCCs is efficiently done in running time $\mc{O}(n^2)$ by using the \textit{DFS algorithm} \citep{algorithm}, or the \textit{Kosaraju-Sharir algorithm} \citep{algorithm2}. As mentioned earlier in Section~\ref{secLSAP}, the LSAP solution is of complexity of $\mc{O}(m^3)$. This gives the total complexity of $\mc{O}(n^{2.5}+m^3)$ to solve the sensor coverage cost optimization problem. In dense graphs typically the nodes outnumber parent SCCs; assuming $m \ll n $, the complexity of the algorithm is reduced to $\mc{O}(n^{2.5})$. In case of knowing that the system is structurally cyclic, e.g. for self-damped systems, the running time of the solution is $\mc{O}(n^2)$.

In practical application using MATLAB, the \texttt{sprank} function checks the structural rank of the system (i.e. the size of maximum matching in the system digraph). System is structurally cyclic if \texttt{sprank(A)} equals to $n$, the size of the system matrix. To find the partial order of SCCs the straightforward way (but not as efficient) is to use \texttt{dmperm} function. This function takes the system matrix $A$ and returns the permutation vectors to transfer it to upper block triangular form and the boundary vectors for SCC classification. The function \texttt{assignDetectionsToTracks} solves the assignment problem using Munkres's variant of the Hungarian algorithm. This function takes the cost matrix and the cost of unassigned states/sensors as input, and returns the indexes of assigned and unassigned states/sensors as output.

As the final comment, recall that we  consider sensor cost optimization only for structurally cyclic systems. For systems which are not structurally cyclic, other than parent SCCs, another type of observationally equivalent set emerges, known as \textit{contraction} \citep{jstsp14}. Number of contractions equals to the number of unmatched nodes in the system digraph which in turn equals to system rank deficiency. Contractions and Parent SCCs determine the number of necessary states for system observability. The key point is that, unlike Parent SCCs which are separate sets, contractions may share state nodes with each other and with parent SCCs\citep{jstsp14}. This implies that the problem cannot be generally reformulated as LSAP and  Problem Formulation 3 is only valid and exact for structurally cyclic systems. In general systems, particularly in structurally rank-deficient systems, a combination of assignment problem and greedy algorithms may need to be applied, which is the direction of future research.

\section*{Acknowledgement}
The first author would like to thank Professor Usman Khan from Tufts University for his helpful suggestions and feedback on this paper.

\bigskip

\end{document}